\documentclass[11pt,a4paper]{article}


\hyphenation{op-tical net-works semi-conduc-tor}

\usepackage{amsmath}
\usepackage{amsthm}
\usepackage{amsfonts}
\usepackage{amssymb}
\usepackage{graphicx}
\usepackage{bm}
\usepackage{cite}
\usepackage[justification=centering]{caption}
\usepackage{algorithm}
\usepackage{algorithmic}
\usepackage{authblk}

\title{An Accuracy-Assured Privacy-Preserving Recommender System for Internet Commerce}


\author{
Zhigang Lu,  Hong Shen\\
       The University of Adelaide\\
       \{zhigang.lu, hong.shen\}@adelaide.edu.au
}

\date{}

\newtheorem{lemma}{Lemma}[]
\newtheorem{theorem}{Theorem}[]
\newtheorem{definition}{Definition}[]

\begin{document}

\maketitle

\begin{abstract}
$Recommender\ systems$, tool for predicting users' potential preferences by computing history data and users' interests, show an increasing importance in various Internet applications such as online shopping. As a well-known recommendation method, neighbourhood-based collaborative filtering has attracted considerable attention recently. The risk of revealing users' private information during the process of filtering has attracted noticeable research interests. Among the current solutions, the probabilistic techniques have shown a powerful privacy preserving effect. The existing methods deploying probabilistic methods are in three categories, one\cite{MCSHERRY2009} adds differential privacy noises in the covariance matrix; one\cite{ADAMOPOULOS2014} introduces the randomisation in the neighbour selection process; the other\cite{ZHU2014} applies differential privacy in both the neighbour selection process and covariance matrix. When facing $k$ Nearest Neighbour ($k$NN) attack, all the existing methods provide no data utility guarantee, for the introduction of global randomness. In this paper, to overcome the problem of recommendation accuracy loss, we propose a novel approach, Partitioned Probabilistic Neighbour Selection, to ensure a required prediction accuracy while maintaining high security against $k$NN attack. We define the sum of $k$ neighbours' similarity as the accuracy metric $\alpha$, the number of user partitions, across which we select the $k$ neighbours, as the security metric $\beta$. We generalise the $k$ Nearest Neighbour attack to $\beta k$ Nearest Neighbours attack. Differing from the existing approach that selects neighbours across the entire candidate list randomly, our method selects neighbours from each exclusive partition of size $k$ with a decreasing probability. Theoretical and experimental analysis show that to provide an accuracy-assured recommendation, our Partitioned Probabilistic Neighbour Selection method yields a better trade-off between the recommendation accuracy and system security.

\vspace{6pt}\textbf{Keywords:} Privacy preserving, Differential privacy, Neighbourhood-based collaborative filtering recommender systems, Internet Commerce
\end{abstract}

\section{Introduction}
\label{INTRO}
$Recommender$ $systems$ predict users' potential preferences by aggregating history data and users' interests. Recently, an increasing importance of recommender systems has been shown in various Internet applications. For example, Amazon has been receiving benefits for a decade from the recommender systems by providing personal recommendation to their customers, and Netflix posted a one million U.S. dollars award for improving their recommender system to make their business more profitable \cite{SCHAFER1999, EKSTRAND2011, KABBUR2013}. Currently, in recommender systems, Collaborative Filtering (CF) is a famous technology with three main popular techniques \cite{LIU2011}, i.e., neighbourhood-based methods \cite{HERLOCKER2002}, association rules based prediction \cite{SARWAR2001}, and matrix factorisation \cite{KOREN2009}. Among these techniques, neighbourhood-based methods are the most widely used in the industry because of its easy implementation and high prediction accuracy.

One of the most popular neighbourhood-based method is $k$ Nearest Neighbour ($k$NN) which provides recommendations by aggregating the opinions of a user's $k$ nearest neighbours \cite{ADOMAVICIUS2005}. Although $k$NN recommender systems present very good performance of recommendation accuracy efficiently, the risk of revealing users' private information during the process of filtering is still a growing concern, e.g., the $k$NN attack presented by Calandrino et al. \cite{CALANDRINO2011} exploits the property that the users are more similar when sharing same rating on corresponding items to reveal user's private data. Thus presenting an efficient privacy preserving neighbourhood-based CF algorithm against $k$NN attack, which achieves a trade-off between the system security and recommendation accuracy, has been a natural research interest.

The literature in CF recommender systems has developed several approaches to preserve users' privacy. Generally, cryptographic, obfuscation, perturbation, probabilistic methods and differential privacy are applied \cite{ZHU2014}. Among them, cryptographic methods \cite{ERKIN2010, NIKOLAENKO2013} provide the most reliable security but the unnecessary computational cost cannot be ignored. Obfuscation methods \cite{PARAMESWARAN2007, WEINSBERG2012} and Perturbation methods \cite{BASU2012, BILGE2012} introduce designed random noise into the original matrix to preserve customers' sensitive information; however the magnitude of noise is hard to calibrate in these two types of methods \cite{DWORK2006B, ZHU2014}. The probabilistic methods \cite{ADAMOPOULOS2014} provided a similarity based weighted neighbour selection of the $k$ nearest neighbours. Similar to perturbation, McSherry et al. \cite{MCSHERRY2009} presented a naive differential privacy method which adds calibrated noise into the covariance (similarity between users/items) matrix. Similar to the probabilistic neighbour selection \cite{ADAMOPOULOS2014}, Zhu et al. \cite{ZHU2014} proposed a Private Neighbour Selection to preserve privacy against $k$NN attack by introducing differential privacy in selecting the $k$ nearest neighbours randomly (also adding noise into covariance matrix with differential privacy). Although the methods in \cite{MCSHERRY2009, ZHU2014, ADAMOPOULOS2014} successfully preserve users' privacy against $k$NN attack, the low prediction accuracy due to the global randomness should be noted. Even worse, \cite{ZHU2014} failed to maintain differential privacy in the process of neighbour selection. Therefore, none of the existing privacy preserving CF recommender systems can provide enough utility while preserving users' private information.

\textbf{Motivation.} The current privacy preserving neighbourhood-based CF methods did not guarantee the data utility against $k$NN attack. Therefore, in this paper, we aim to present a privacy preserving neighbourhood-based CF recommendation scheme which satisfies the following properties:
\begin{itemize}
\item[(1)] Easy implementation.
\item[(2)] Absolutely keep differential privacy.
\item[(3)] Significantly decrease the magnitude of noise in differential privacy.
\item[(4)] Quantify the level of recommendation accuracy and system security.
\end{itemize}

Actually, it is clear that the probabilistic methods (including naive probabilistic methods and differential privacy methods) are efficient methods against $k$NN attack; however, because of the global noises, the neighbour quality, namely the prediction accuracy, is impacted significantly. Thus, to decrease the magnitude of differential privacy noise, we may propose the following approach: we can simply add Laplace noise to the final rating prediction after the normal $k$NN CF recommendation. But Sarathy et al. has shown in \cite{SARATHY2011} that the above method will release users' privacy because Laplace mechanism does not work well in numeric data. So, to control the neighbour quality and to decrease the magnitude of noise, it is natural to avoid the global randomness and repeatedly adding noise. Therefore, we present a partitioned probabilistic neighbour selection method without any perturbations in the process of rating prediction.

\textbf{Contributions.} In this paper, to overcome the problems of low recommendation accuracy, we propose a novel method, Partitioned Probabilistic Neighbour Selection. The main contributions of this paper are:

(1) We expand the classic $k$NN attack to a more general case, $\beta$-$k$NN attack, which flexibly adjusts the size of fake user's set to improve the attack effectiveness. $\beta$ is essentially regarded as a security measure denoting the degree of difficulty for an attacker to break the neighbourhood-based CF recommender systems. We are the first to consider the case when $\beta>1$.

(2) To protect users' data privacy against $\beta$-$k$NN attack, we propose a novel differential privacy preserving neighbourhood-based CF method, which ensures a required prediction accuracy while achieving a better trade-off between the system security and recommendation accuracy against $k$NN attack.

(3) To the best of our knowledge, we are the first to propose a theoretical analysis of the recommendation accuracy and system security on the recommendation results from any randomised neighbour selection methods in the neighbourhood-based CF recommender systems. Previous related work only gave the experimental analysis on the same issues.

\textbf{Organisation.} The rest of this paper is organised as follows: In Section \ref{RW}, we summarise both the advantages and disadvantages in the existing privacy preserving methods on CF recommender systems. In Section \ref{PRE}, we introduce the relevant background knowledge in this paper. In Section \ref{ATK}, we introduce an existing attack to neighbourhood-based CF recommender systems, then expand it to a general case, $\beta$-$k$NN attack. Next, We proposed a novel differential privacy recommendation approach, Partitioned Probabilistic Neighbour Selection, in Section \ref{PPNS}. Afterwards, the theoretical analysis of our approach on the performance of both recommendation accuracy and system security are provided in Section \ref{EVA}. Then, in Section \ref{EXP}, we show the experimental evaluation results. Finally, in Section \ref{CON}, we conclude this paper.

\section{Related Work}
\label{RW}
A noticeable number of literature has been published to preserve customers' private data in recommender systems. However, Calandrino et al. \cite{CALANDRINO2011} proposed a neighbourhood-based CF attack, $k$NN attack, which is a serious privacy threat to the neighbourhood-based CF recommender systems in e-commerce, e.g., Amazon. In this section, we briefly discuss some of the research literature in privacy preserving neighbourhood-based CF recommender systems.

\subsection{Traditional Privacy Preserving Recommender Systems}
Amount of traditional privacy preserving methods have been developed in CF recommender systems \cite{ZHU2014}, including cryptographic \cite{ERKIN2010, NIKOLAENKO2013}, obfuscation \cite{PARAMESWARAN2007, WEINSBERG2012}, perturbation \cite{BASU2012, BILGE2012} and probabilistic methods \cite{ADAMOPOULOS2014}. Erkin et al. \cite{ERKIN2010} applied homomorphic encryption and secure multi-party computation in privacy preserving recommender systems, which allows users to jointly compute their data to receive recommendation without sharing the true data with other parties. Nikolaenko et al. \cite{NIKOLAENKO2013} combined a famous recommendation technique, matrix factorization, and a cryptographic method, garbled circuits, to provide recommendations without learning the real user ratings in database. The Cryptographic methods provide the highest guarantee for both prediction security ans system security by introducing encryption rather than adding noise to the original record. Unfortunately, unnecessary computational cost impacts its application in industry \cite{ZHU2014}. Obfuscation and perturbation are two similar data processing methods. In particular, obfuscation methods aggregate a number of random noises with real users rating to preserve user's sensitive information. Parameswaran et al. \cite{PARAMESWARAN2007} proposed an obfuscation framework which exchanges the sets of similar items before submitting the user data to CF server. Weinsberg et al. \cite{WEINSBERG2012} introduced extra reasonable ratings into user's profile against inferring user's sensitive information. Perturbation methods modify the user's original ratings by a selected probability distribution before using these ratings. Particularly, Bilge et al. \cite{BILGE2012} added uniform distribution noise to the real ratings before the utilisation of user's rating in prediction process. While, Basu et al. \cite{BASU2012} regarded the deviation between two items as the adding noise. Both perturbation and obfuscation obtain good trade-off between prediction accuracy and system security due to the tiny data perturbation, but the magnitude of noise or the percentage of replaced ratings are not easy to be calibrated \cite{DWORK2006B, ZHU2014}. The probabilistic method \cite{ADAMOPOULOS2014} applied weighted sampling in neighbour selection which preserves users' privacy against $k$NN successfully; however, it cannot provide enough accuracy due to its global randomness. Because the performance of the neighbourhood-based CF methods largely depends on the quality of neighbours. We suppose the top $k$ neighbour as the highest quality neighbour set, the randomised weighted selection process will return neighbours with lower similarity with a high probability. Then the prediction accuracy will be impacted significantly \cite{ZHU2014}. Therefore, achieving a trade-off between privacy and utility, while calibrating the adding noise are difficult tasks for these techniques.

\subsection{Differential Privacy Recommender Systems}
\label{ZHU}
As a well-known privacy definition, the differential privacy technology \cite{DWORK2006} has been applied in the research of privacy preserving recommender systems. For example, McSherry et al. \cite{MCSHERRY2009} provided the first differential privacy neighbourhood-based CF recommendation algorithm. In fact, their naive differential privacy protects the neighbourhood-based CF recommender systems against $k$NN attack successfully, as they added Laplace noise into the covariance (similarity between users/items) matrix globally, so that the output $k$ nearest neighbours set is no longer the original top $k$ neighbours. However, the global noise decreases the accuracy of their recommendation algorithms significantly.

Another differential privacy neighbourhood-based CF recommender systems algorithm is proposed by Zhu et al. \cite{ZHU2014} which inspired this study. It aims to provide better prediction accuracy than McSherry et al. \cite{MCSHERRY2009} while aiming to keep differential privacy at both neighbour selection stage and rating prediction stage. They proposed a Private Neighbour Collaborative Filtering (PNCF) by introducing exponential differential privacy \cite{MCSHERRY2007} to the process of neighbour selection to guarantee the system security against $k$NN attack. After selecting the $k$ neighbours, same with McSherry et al. \cite{MCSHERRY2009}, they also added Laplace noise into the similarity matrix to make the final prediction.

Unlike the $k$ nearest neighbour method which selects the $k$ most similar candidates, the PNCF method \cite{ZHU2014} randomly selects the $k$ neighbours with each candidate $u_i$'s weight $\omega_i$. According to exponential mechanism of differential privacy, the selection weight is measured by a score function and its corresponding sensitivity as follow,
\begin{equation}
\label{OMEGA}
\omega_i=\exp(\frac{\epsilon}{4k\times RS}q_a(U(u_a),u_i)),
\end{equation}
where $q$ is the score function, $RS$ is the Recommendation-Aware Sensitivity of score function $q$ for any user pairs $u_i$ and $u_j$, $\epsilon$ is differential privacy parameter, and $U(u_a)$ is the set of user $u_a$'s candidate list. For a user $u_a$, the score function $q$ and its Recommendation-Aware Sensitivity are defined as follows:
\begin{equation}
\label{SelectionFunc}
q_a(U(u_a),u_i)=sim_{ai},
\end{equation}
\begin{equation}
\label{RS}
\begin{array}{c}
RS=\max\left\{ \underset{s\in S_{ij}}{\max}\left(\frac{r_{is}\cdot r_{js}}{\left\Vert r_{i}'\right\Vert \left\Vert r_{j}'\right\Vert }\right),\right.\left.\underset{s\in S_{ij}}{\max}\left(\frac{r_{is}\cdot r_{js}\left(\left\Vert r_{i}\right\Vert \left\Vert r_{j}\right\Vert -\left\Vert r_{i}'\right\Vert \left\Vert r_{j}'\right\Vert \right)}{\left\Vert r_{i}\right\Vert \left\Vert r_{j}\right\Vert \left\Vert r_{i}'\right\Vert \left\Vert r_{j}'\right\Vert }\right)\right\},
\end{array}
\end{equation}
where $r_{is}$ is user $u_i$'s rating on item $t_s$, $sim_{ai}$ is the similarity between user $u_a$ and $u_i$, $r_i$ is user $u_i$'s average rating on every item, $S_{ij}$ is the set of all items co-rated by both users $i$ and
$j$, i.e., $S_{ij}=\{s\in S|r_{is}\neq \varnothing \ \&\
r_{js}\neq \varnothing \}$.

However, the above naive differential privacy neighbour selection is nearly the same to the probabilistic neighbour selection \cite{ADAMOPOULOS2014}. To address the above problem of low prediction accuracy in \cite{ADAMOPOULOS2014}, a truncated parameter $\lambda$ was introduced in \cite{ZHU2014}. Simply speaking, the candidates whose similarity is greater than $(sim(a,k)+\lambda)$ are selected to the neighbour set, while, whose similarity is less than $(sim(a,k)-\lambda)$ will not be selected, where $sim(a,k)$ denotes the similarity of user $u_a$'s $k$th neighbour. Theorem 3.1 in \cite{ZHU2014} provided an equation to calculate the value of $\lambda$, i.e. $\lambda=\min(sim(a,k),\frac{4k\cdot RS}{\epsilon}\ln\frac{k(n-k)}{\rho})$, where $\rho$ is a constant, $0<\rho<1$.

However, we observe that the above idea in \cite{ZHU2014} has three weaknesses. Firstly, it adds random noise in the process of neighbour selection twice; however, it is not necessary. Because we can preserve privacy against $k$NN successfully only by introducing randomness once, the extra randomness will decrease the prediction accuracy significantly. Secondly, the value of $\lambda$ may not be achievable. This is because when computing the value of $\lambda$ by $\rho$, it results in a good theoretical recommendation accuracy, but does not yield a good experimental recommendation accuracy on the given test datasets in \cite{ZHU2014}. So the PNCF method \cite{ZHU2014} will actually be a method of Global Probabilistic Neighbour Selection \cite{ADAMOPOULOS2014} and cannot guarantee any recommendation accuracy. Thirdly, the PNCF scheme breaks differential privacy in the process of neighbour selection. Suppose there is a tiny change in the dataset, then the value of similarity between target user $u_a$ and other users $u_i$ in the candidate list will change. There may exist a user $u_c$ whose probability of being selected may change from 0 to $x>0$, then the ratio between the two probabilities will be 0 or infinite, none of which satisfy Definition \ref{DIFFPRI} in Section \ref{DIFF}.

\section{Preliminaries}
\label{PRE}
In this section, we introduce the foundational concepts and mathematical model related with this paper in collaborative filtering, differential privacy, and Wallenius' non-central hyper-geometric distribution.
\subsection{$k$ Nearest Neighbour Collaborative Filtering}
A collaborative filtering based recommender system predicts users' potential preferences by aggregating the relevant historical data. Collaborative filtering, a popular technique in recommender systems, is in three categories: neighbourhood-based methods, association rules based methods, and matrix factorisation methods \cite{LIU2011}. The neighbourhood-based methods generally provides recommendations by combining the opinions of a user's $k$ nearest neighbours \cite{ADOMAVICIUS2005}.

Neighbour Selection and Rating Prediction are two main stages in neighbourhood-based CF \cite{ZHU2014}. At the Neighbour Selection stage, a target user $u_a$'s neighbours are selected according to their similarity value in the target user $u_a$'s similarity array $\mathcal{S}_a$, where similarities between any two users/items are calculated by a measurement metric. Two of the most popular similarity measurement metrics are the Pearson correlation coefficient and Cosine-based Similarity \cite{ADOMAVICIUS2005}. In $k$NN method, we select the $k$ most similar neighbours of a target user/item.
\begin{itemize}
\item[(1)]
Pearson Correlation Coefficient (user-based):
\begin{equation}
sim_{ij}=\frac{\sum_{s\in S_{ij}}(r_{is}-\bar
r_i)(r_{js}-\bar r_j)}{\sqrt{\sum_{s\in
S_{ij}}(r_{is}-\bar r_i)^2\sum_{s\in
S_{ij}}(r_{js}-\bar r_j)^2}},
\end{equation}
\item[(2)]
Cosin-based Similarity (user-based):
\begin{equation}
\label{COS}
\begin{array}{ll}
sim_{ij}&= \cos(\bm{r}_i,\bm{r}_j)={\frac{\bm{r}_i\cdot
\bm{r}_j}{\|\bm{r}_i\|\times\|\bm{r}_j\|}}\\
&= \frac{\sum_{s\in S_{ij}}r_{is}r_{js}}{\sqrt{\sum_{s\in S_{ij}}r_{is}^2}\sqrt{\sum_{s\in S_{ij}}r_{js}^2}},
\end{array}
\end{equation}
\end{itemize}
where $r_{is}$ is user $u_i$'s rating on item $t_s$, $r_{is}\in\mathcal{R}$, $\mathcal{R}$ is the user-item rating dataset, $sim_{ij}$ is the similarity between user $u_i$ and user $u_j$, $\bar{r}_i$ is user $u_i$'s average rating on every item, $S_{ij}$ is the set of all items co-rated by both users $i$ and $j$, i.e., $S_{ij}=\{s\in S|r_{is}\neq \varnothing \ \&\ r_{js}\neq \varnothing \}$.

At the stage of Rating Prediction in user-based CF methods, the predicted rating $\hat{r}_{ax}$ of user $u_a$ on item $t_x$ is calculated as an aggregation of other users' rating on item $t_x$ \cite{ADOMAVICIUS2005, ZHU2014}. The prediction of $\hat{r}_{ax}$ is computed as follow:
\begin{equation}
\label{PRED}
\hat{r}_{ax}=\frac{\sum_{u_i \in N_k(u_a)}sim(a,i)r_{ix}}{\sum_{u_i \in N_k(u_a)}|sim(a,i)|},
\end{equation}
where, $N_k(u_a)$ is a sorted set which contains user $u_a$'s $k$ nearest neighbours, $N_k(u_a)$ is sorted by similarity in a descending order, $sim(a,i)$ is the $i$th neighbour of $u_a$ in $N_k(u_a)$.

\subsection{Differential Privacy}
\label{DIFF}
Informally, differential privacy \cite{DWORK2006, DWORK2008} is a scheme that minimises the sensitivity of output for a given statistical operation on two different (differentiated in one record to protect) datasets. Specifically, differential privacy guarantees no matter whether one specific record appears in a database, the privacy mechanism will shield the specific record to the adversary. The strategy of differential privacy is adding a random noise to the result of a query function on the database. 

To understand the spirit of differential privacy clearly, several items will be introduced in advance. Firstly, $X(x_1, x_2, \cdots, x_n)$ and $X^{\prime}(x_{1}^{\prime}, x_{2}^{\prime}, \cdots, x_{n}^{\prime})$ are two databases with $n$ entries which differ in only one entry, where $x_i$ and $x_{1}^{\prime}$ are the $i$th entry of $X$ and $X^{\prime}$. We call $X$ and $X^{\prime}$ are neighbouring dataset. Secondly, $f(X)$ is the query function on database $X$, the respond is the combination of the real answer $a = f(X)$ and a chosen random noise. Thirdly, the privacy mechanism $\mathcal{T}$, namely, the respond, is computed by $\mathcal{T}(X) = f(X) + Noise$. A formal definition of Differential Privacy is shown as follow:

\begin{definition}[$\epsilon$-Differential Privacy \cite{DWORK2006}]
\label{DIFFPRI}
A randomised mechanism $\mathcal{T}$ is $\epsilon$-differential privacy if for all neighbouring datasets $X$ and $X^{\prime}$, and for all outcome sets $S \subseteq Range(T)$, $\mathcal{T}$ satisfies: $\Pr[\mathcal{T}(X) \in S] \leq \exp(\epsilon)\cdot\Pr[\mathcal{T}(X^{\prime}) \in S]$, where $\epsilon$ is a privacy budget.
\end{definition}
The privacy budget $\epsilon$ is set by the database owner. Usually, a smaller $\epsilon$ denotes a higher privacy guarantee because the privacy budget $\epsilon$ reflects the magnitude of difference between two neighbouring datasets.

There are two main applications of the randomised mechanism $\mathcal{T}$: the Laplace mechanism \cite{DWORK2006} and the Exponential mechanism \cite{MCSHERRY2007}. The definitions are shown as below:
\begin{definition}[Laplace Mechanism \cite{DWORK2006}]
Let a query function $f$: $\mathbb{R}\to\mathbb{R}^d$, the $\epsilon$-differential privacy mechanism $\mathcal{T}$ obeys that $\mathcal{T}(X)=f(X)+Lap^{-1}(\frac{\Delta f}{\epsilon}, \Pr)^d$, where the sensitivity of function $f$, $\Delta f = \max{|f(X) - f(X^{\prime})|}$, for all neighbouring datasets $X$, $X^{\prime}$ $\in \mathcal{D}^n$, and $d$ represents the dimension.
\end{definition}
\begin{definition}[Exponential Mechanism \cite{MCSHERRY2007}]
Given a score function of a database $X$, $q(X,x)$, which reflects the score of query respond $x$. The exponential mechanism $\mathcal{T}$ provides $\epsilon$-differential privacy, if $\mathcal{T}(X)$ = \{the probability of a query respond $x$ $\propto$ $\exp({\frac{\epsilon\cdot q(X,x)}{2\Delta q}})$\}, where $\Delta q=\max|q(X,x)-q(X^{\prime},x)|$, denotes the sensitivity of $q$.
\end{definition}

\subsection{Wallenius' Non-central Hyper-geometric Distribution}
\label{WNHD}
Wallenius' non-central hyper-geometric distribution is a distribution of weighted sampling without replacement. Formally, it is defined as follow \cite{FOG2008B}: We assume there are $c$ distinct categories in the population, each category contains $m_i$ individuals, $i \in [1, c]$. All the individuals from category $i$ have the same weight $\omega_i$, $i \in [1, c]$. The probability of an individual is sampled at a given draw is proportional to its weight $\omega_i$. Let $\bm{x}_v=(x_{1v},x_{2v},\ldots,x_{cv})$ denote the total number of the individuals in each colour sampled after the first $v$ draws. The probability that the next draw gives a individual of colour $i$ is:
\begin{equation}
p_{i(v+1)}(\bm{x}_v)=\frac{(m_i-x_{iv})\omega_i}{\sum_{j=1}^c(m_j-x_{jv})\omega_j}.
\end{equation}
The weighted sampling process without replacement is repeatedly until $k$ individuals have been retained, namely, $k = \sum_{i = 1}^c{x_i}$, where $x_i$ denotes the number of individuals sampled from category $i$ by Wallenius' non-central hypergeometric distribution.

Wallenius \cite{WALLENIUS1963} proposed the probability mass function for this distribution in the univariate case $(c = 2)$. Chesson \cite{CHESSON1976} expanded Wallenius's solution to the multivariate case $(c > 2)$. In this paper, we focus on the multivariate Wallenius' non-central hyper-geometric distribution's probability mass function because we regard one user/item in a recommender system as one individual in Wallenius' non-central hyper-geometric distribution. The multivariate probability mass function (PMF) is shown as blow:
\begin{equation}
\label{PMF}
mwnchypg=\bm{\Lambda(x)I(x)},
\end{equation}
where
$\bm{\Lambda(x)}=\prod_{i=1}^c
\begin{pmatrix}m_{i}\\
x_{i}
\end{pmatrix}, \bm{I(x)}=\int_0^1\prod_{i=1}^c(1-t^{\omega_i/d})^{x_i}\text{d}t, d=\bm{\omega\cdot(m-x)}=\sum_{i=1}^c\omega_i(m_i-x_i),$ $\bm{x}=(x_1,x_2,\ldots,x_c)$, $\bm{m}=(m_1,m_2,\ldots,m_c)$, $\bm{\omega}=(\omega_1,\omega_2,\ldots,\omega_c)$.

While in this paper, we mainly use the following properties to evaluate different probabilistic relevant approaches. Manly \cite{MANLY1974} gave the approximated solution $\bm{\mu}^*=(\mu_{1}^*,\mu_{2}^*,\ldots,\\\mu_{c}^*)$ to the mean $\bm{\mu}=(\mu_{1},\mu_{2},\ldots,\mu_{c})$ of $\bm{x}$ after the final draw:
\begin{equation}
\label{approx}
\left(1-\frac{\mu_1^*}{m_1}\right)^{1/\omega_1}=\left(1-\frac{\mu_2^*}{m_2}\right)^{1/\omega_2}=\ldots=\left(1-\frac{\mu_c^*}{m_c}\right)^{1/\omega_c},
\end{equation}
where $\sum_{i=1}^c\mu_i^*=k$, $\forall i\in C: 0\leq \mu_i^* \leq m_i.$

Fog \cite{FOG2008} stated the following properties of Equation \eqref{approx}: firstly, the solution $\bm{\mu}^*$ is valid under the conditions that $\forall i \in C: m_i>0$ and $\omega_i>0$. Secondly, the mean given by Equation \eqref{approx} is a good approximation in most cases. Thirdly, Equation \eqref{approx} is exact when all $\omega_i$ are equal.

\section{A Generalised Privacy Attack for Recommender Systems}
\label{ATK}
In this section, we firstly introduce a popular attack, $k$ nearest neighbour attack, then we expand the concept to a general attack, $\beta$-$k$ nearest neighbour attack.
\subsection{$k$ Nearest Neighbour Attack}
Calandrino et al. \cite{CALANDRINO2011} stated a user-based attack called \textit{k Nearest Neighbour} ($k$NN) attack. Simply, the $k$NN attack exploits the property that the users are more similar when sharing same rating on corresponding items to reveal user's private data.

We assume that the recommendation algorithm ($k$NN CF recommendation) and its parameter $k$ are known to the attacker. Furthermore, the attacker's auxiliary information consists of a target user $u_a$'s partial history rating values, i.e., he already knows the ratings of $m$ items that $u_a$ has rated. Usually, $m \approx 8$. He aims to catch $u_a$'s transactions that he does not yet know about.

To achieve this goal, the attacker firstly creates $k$ fake users who have the same ratings with $u_a$ only on the $m$ items. With a high probability, each fake user's $k$ nearest neighbours set $N_k(\text{fake user})$ will include the other $k-1$ fake users and the target user $u_a$. Because the target user $u_a$ is the only neighbour who has ratings on the items which are not rated by the fake users, to provide recommendations on these items to the fake users, the recommender system has to give $u_a$'s rating to the fake users directly. Obviously, the fake users learn the target user $u_a$'s whole rating list successfully with $k$NN attack.

\subsection{$\beta$-$k$ Nearest Neighbours Attack}
According to the existing privacy preserving neighbourhood-based CF recommendation methods, we expand the $k$NN attack to a more general case, named $\beta$-$k$ Nearest Neighbour ($\beta$-$k$NN) attack.

As we know, to preserve the target user $u_a$'s private information against $k$NN attack, we should avoid selecting the true $k$ nearest neighbours, so the existing methods applied the randomness techniques. However, suppose the final $k$ neighbours are selected from the top $\beta k$ users of $u_a$'s candidate list, also the parameters $\beta$ and $k$ are known to the attacker, the attacker would catch $u_a$'s private data with a high probability by creating $\beta k$ fake users. When $\beta$ is not great enough, it is still not difficult to break the privacy preserving neighbourhood-based CF recommender systems. Therefore, the $\beta$-$k$NN attack can flexibly adjust the size of fake user's set to improve the attack effectiveness. Actually, $k$NN attack can be regarded as 1-$k$NN attack in the expanded case of $\beta$-$k$NN attack.

In $\beta$-$k$NN attack, $\beta$ can be treated because a security measure as a greater value of $\beta$ represents a higher fraud cost. We will show the relationship between the prediction utility and $\beta$ in Section \ref{EVA}.

\section{Privacy Preservation by Partitioned Probabilistic Neighbour Selection}
\label{PPNS}
In this section, we firstly provide two performance metrics on the privacy preserving neighbourhood-based CF recommender systems against $\beta$-$k$NN attack. Then we propose our Partitioned Probabilistic Neighbour Selection algorithm based on the previous analysis.
\subsection{Performance Metrics}
\subsubsection{Accuracy Metric}
For any privacy preserving neighbourhood-based CF recommender systems, if the sum of similarity of the selected $k$ neighbours is greater, the predicted rating value will be better. The reason is simple: the neighbour is closer to the target user $u_a$ means the predicted result is more reliable, namely, we prefer the method which selects the greater similarity sum. Therefore, we define the accuracy metric $\alpha$ as the sum of $k$ selected neighbours' similarity.

Because we propose a random neighbour selection method, the accuracy metric $\alpha$ should be regarded as the expected sum of $k$ selected neighbours' similarity. However, it is not obvious to directly compute the expectation of the $k$ neighbours similarity sum: $\mathbb{E}(\sum_{i \in N_k(u_a)}sim(a,i))$, as we need to find all the user combinations and corresponding probabilities. So we give another way to compute this expectation,
\begin{equation}
\label{EXPSIMSUM}
\mathbb{E}(\sum_{i \in N_k(u_a)}sim(a,i))=\sum_{i=1}^n(sim(a,i)\mathbb{E}(x_i))=\sum_{i=1}^n{sim(a,i)\mu_i},
\end{equation}
see Section \ref{WNHD} for the definition of $x_i$ and $\mu_i$. So we compute the accuracy by the following equation in this paper:
\begin{equation}
\label{ALPHA}
\alpha=\sum_{i=1}^n{sim(a,i)\mu_i}.
\end{equation}

\subsubsection{Security Metric}
According to the $\beta$-$k$NN attack, suppose the final $k$ neighbours are selected from the top $\beta k$ users of $u_a$'s candidate list. We assume that the parameters $\beta$ and $k$ are known to the attacker, so the attacker would catch $u_a$'s privacy with a high probability through the same process of $k$NN attack by creating $\beta k$ fake users. When $\beta$'s value is not great, it is still not difficult to break the privacy preserving recommender systems. Therefore, we define $\beta$ as the security metric, the greater value of $\beta$ denotes the higher fraud cost for the attacker, namely, we want to achieve a trade-off between the security metric $\beta$ and a fixed prediction accuracy metric $\alpha$.

\subsection{Partitioned Probabilistic Neighbour Selection Algorithm}
\label{MODEL}
According to the motivation and previous analysis, we provide an original version of our Partitioned Probabilistic Neighbour Selection algorithm. We firstly partition the a target user's candidate list (descending order of similarity value) by the given $k$, then apply a geometric distribution on the candidate list to select $\lceil p(1-p)^{i-1}k\rceil$ neighbours (apply exponential differential privacy in every partition) from partition $i$ until we have a total of $k$ neighbours, where integer $i\in [1,+\infty)$, $p$ is a geometric distribution parameter. It is clear that our original partitioned probabilistic neighbour scheme satisfies property (1) (easy implementation) in Section \ref{INTRO}, for it does not introduce any extra computational cost. In fact, it is natural to regard the low neighbour quality as the noise in the process of neighbour selection, since the low neighbour quality has the same impact on the prediction accuracy as the noise. So our method satisfies property (3) (decreasing the magnitude of noise) in Section \ref{INTRO} in two ways: 1. it only adds noise in the process of neighbour selection. 2. it controls the neighbour quality by tuning the geometric distribution parameter $p$ in the process of neighbour selection. However, the original version does not satisfy the property (2) (keeping differential privacy) and (4) (quantifying the accuracy and security), we now show the reasons and modify it to satisfy the property (2) and (4).

In the original version, we select $\lceil p(1-p)^{i-1}k\rceil$ neighbours with exponential differential privacy from partition $i$ until we have $k$ neighbours. Actually, it breaks differential privacy with the same reason (see details in Section \ref{ZHU}) of the PNCF method \cite{ZHU2014}. Simply speaking, there may exist some users whose probability of selection will be changed from zero to a positive number because of a tiny change in rating set. To guarantee the prediction accuracy, we only modify the original scheme by changing the way we select the last neighbour (see details in next paragraph). The modified scheme keeps absolute differential privacy because no matter how we change the dataset, every candidate's probability of selection cannot be zero. To quantify the level of recommendation accuracy and system security, we use the performance metrics $\alpha$ and $\beta$. We compute the parameter $p$ and the security metric $\beta$ by a given $\alpha$ by Equation \eqref{final}.

Algorithm \ref{ALG} shows the Partitioned Probabilistic Neighbour Selection (PPNS) algorithm. In lines 1 to 5, we compute the necessary parameters by Equation \eqref{COS}, \eqref{RS}, \eqref{SelectionFunc}, \eqref{OMEGA} and \eqref{final}. In lines 6 to 18, we select the $k$ neighbours by Partitioned Probabilistic Neighbour Selection, then return the target user's $k$ neighbours and the security metric value $\beta$. We firstly mark all of the partitions as unvisited. Next, we select $\lceil p(1-p)^{i-1}k\rceil$ neighbours with exponential differential privacy from partition $i$ (mark this partition as visited) until we have a total of $k-1$ neighbours. Finally, we select the last neighbour from all the unvisited partitions. 
\begin{algorithm}[htb] 
\caption{Partitioned Probabilistic Neighbour Selection.} 
\label{ALG}
\begin{algorithmic}[1] 
\REQUIRE ~~\\
Original user-item rating set, $\mathcal{R}$;\\
Target user, $u_a$ and prediction item, $t_x$;\\
Number of neighbours, $k$;\\
Differential privacy parameter, $\epsilon$;\\
Accuracy metric, $\alpha$.
\ENSURE ~~\\
Target user $u_a$'s $k$-neighbour set, $N_{k}(u_a)$;\\
Security metric, $\beta$.
\STATE Compute the similarity list for target user $u_a$, $\mathcal{S}_a$;
\STATE Sort $\mathcal{S}_a$ in descending order, $\mathcal{S}_a$;
\STATE Compute exponential differential privacy sensitivity, $RS$;
\STATE Compute user $u_i$'s selection weight, $\omega_i$;
\STATE Compute the geometric distribution parameter, $p$;
\STATE Partition the sorted $S_a$ by $k$;
\FOR{$i=1$ to $n$}
\IF{Neighbour Number $\neq k-1$}
\STATE Select $\lceil p(1-p)^{i-1}k\rceil$ neighbours from partition $i$ to $N_{k}(u_a)$;
\STATE Mark partition $i$ as visited;
\STATE Neighbour Number += $\lceil p(1-p)^{i-1}k\rceil$;
\ELSE
\STATE break;
\ENDIF
\ENDFOR
\STATE Select one neighbour from unvisited partitions;
\STATE $\beta=$ last neighbour's partition index number;
\RETURN $N_{k}(u_a)$, $\beta$; 
\end{algorithmic}
\end{algorithm}

\section{Theoretical Analysis}
\label{EVA}
In this section, we use multivariate Wallenius' non-central hyper-geometric distribution to analyse any randomised neighbour selection methods on both performance of accuracy and security against $k$NN attack theoretically. The reason is both multivariate Wallenius' non-central hyper-geometric distribution and randomised neighbour selection methods are weighted sampling without replacement, the samples are selected one by one from universe, and the sampling weight is only depends on each sample's attribute, i.e., the ball's colour or user's similarity.

\subsection{Accuracy Analysis}
In this part, to analyse the accuracy performance, we will firstly modify the Equation \eqref{approx} to match with a general randomised neighbour selection method. As the selection weight in a general probabilistic neighbour selection method only relies on the user's similarity, we regard user $u_i$'s similarity $sim(a,i)$ as the sample's colour in multivariate Wallenius' non-central hyper-geometric distribution. Thus in randomised neighbour selection methods, $m_i=1$, $c=n$, $N=\sum_{i=1}^c{m_i}=\sum_{i=1}^n{m_i}=n$. Therefore, we rewrite the Equation \eqref{approx} as:
\begin{equation}
\label{MAIN}
A=(1-\mu_1)^{1/\omega_1}=(1-\mu_2)^{1/\omega_2}=
\ldots=(1-\mu_n)^{1/\omega_n},
\end{equation}
where $A$ is a constant.

Now we start evaluating the Partitioned Probabilistic Neighbour Selection by Equation \eqref{MAIN}. To make it easy, we also partition the candidate list in PNCF method \cite{ZHU2014} and Probabilistic Neighbour Selection \cite{ADAMOPOULOS2014} by the given $k$.

\begin{lemma}
\label{l1}
$\mathcal{C}$ is an $n$ sized set. We independently sample several samples with multivariate Wallenius' non-central hyper-geometric distribution from $\mathcal{C}$ twice, suppose $\mu_i$ and $\hat{\mu_i}$ are the expected number of sample $i$ from the two samplings. Then $\forall i\in [1, n]$, $\mu_i>\hat{\mu_i}\Leftrightarrow\sum_{i=1}^{n}\mu_i>\sum_{i=1}^{n}\hat{\mu_i}$.
\end{lemma}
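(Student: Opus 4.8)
The plan is to collapse the entire statement onto the single scalar that Equation \eqref{MAIN} attaches to a Wallenius draw. Both samplings are taken from the same set $\mathcal{C}$, and the Wallenius weight of a sample depends only on its attribute (its similarity), so the two runs share one weight vector $(\omega_1,\dots,\omega_n)$ with every $\omega_i>0$ and differ only in how many samples they retain. Since here $m_i=1$, solving Equation \eqref{MAIN} for each mean gives $\mu_i=1-A^{\omega_i}$ for the first run and $\hat{\mu}_i=1-\hat{A}^{\omega_i}$ for the second, where $A$ and $\hat{A}$ are the constants fixed by the two draw sizes. The constraints $0\le\mu_i\le 1$ together with $\omega_i>0$ place both constants in $[0,1]$, the range on which every power map $A\mapsto A^{\omega_i}$ is strictly increasing.

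The forward implication is immediate: if $\mu_i>\hat{\mu}_i$ holds for every $i$, then summing over $i$ gives $\sum_i\mu_i>\sum_i\hat{\mu}_i$. For the converse, the key observation is that with a shared weight vector a single scalar drives every coordinate in the same direction. Indeed, for fixed $\omega_i>0$ and $A,\hat{A}\in[0,1]$ one has $\mu_i>\hat{\mu}_i \Leftrightarrow A^{\omega_i}<\hat{A}^{\omega_i} \Leftrightarrow A<\hat{A}$, and this equivalence is identical for every index $i$. Hence the inequalities $\mu_i>\hat{\mu}_i$ all hold simultaneously or all fail, exactly according as $A<\hat{A}$.

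It remains to connect the constant to the sum. The total retained mass is $\sum_{i=1}^n\mu_i=n-\sum_{i=1}^n A^{\omega_i}$, which is strictly decreasing in $A$ because each term $A^{\omega_i}$ is strictly increasing in $A$ for $\omega_i>0$; the same holds for the second run. Therefore $\sum_i\mu_i>\sum_i\hat{\mu}_i \Leftrightarrow A<\hat{A}$. Chaining this with the previous equivalence yields $\bigl(\forall i:\mu_i>\hat{\mu}_i\bigr) \Leftrightarrow A<\hat{A} \Leftrightarrow \sum_i\mu_i>\sum_i\hat{\mu}_i$, which is the assertion of the lemma.

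I expect the only genuine obstacle to be justifying the structural hypotheses that make this single-parameter reduction legitimate: that the two runs really do share one weight vector (so the means form the one-parameter family $1-A^{\omega_i}$ rather than an arbitrary pair of vectors), and that the constants stay in the interval where Equation \eqref{MAIN} is valid and the power maps are strictly monotone. Indeed, a converse of this sort fails outright once the two runs use different weights, so the shared-weight hypothesis is essential rather than cosmetic. It is also worth recording that Equation \eqref{MAIN} is Manly's approximation to the true Wallenius means, so the lemma is established for these approximate means, which are precisely the quantities the paper uses to define the accuracy metric $\alpha$ in Equation \eqref{ALPHA}.
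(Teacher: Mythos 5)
Your proposal is correct and follows essentially the same route as the paper's own proof: the forward direction by summation, and the converse by solving Equation \eqref{MAIN} for $\mu_i$ in terms of the shared constant $A$ and exploiting monotonicity of the power maps to show that a single scalar comparison $A<\hat{A}$ drives all coordinates simultaneously. If anything, your write-up is more careful than the paper's, which contains exponent typos (writing $A^{1/\omega_1}$ where $A^{\omega_i}$ is meant) and writes $k$ where the correct total is $n$ in the identity $\sum_{i=1}^n\mu_i=n-\sum_{i=1}^n A^{\omega_i}$.
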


\begin{proof}
Let $\sum_{i=1}^{n}\mu_i=X$, $\sum_{i=1}^{n}\hat{\mu_i}=\hat{X}$, $A=(1-\mu_i)^{1/{\omega_i}}$, $\hat{A}=(1-\hat{\mu_i})^{1/{\omega_i}}$.

(1) Proof of sufficient condition, $\mu_i>\hat{\mu_i}\Rightarrow \sum_{i=1}^{n}\mu_i>\sum_{i=1}^{n}\hat{\mu_i}$:

$\because$ the size of the set $\mathcal{C}$ keep the same.

$\therefore$ $\forall i\in [1, n], \mu_i>\hat{\mu_i} \Rightarrow \sum_{i=1}^{n}\mu_i>\sum_{i=1}^{n}\hat{\mu_i}.$

(2) Proof of Necessary condition, $\sum_{i=1}^{n}\mu_i>\sum_{i=1}^{n}\hat{\mu_i}\Rightarrow \mu_i>\hat{\mu_i}$:

According to Equation \eqref{MAIN}, we have,

$\begin{array}{ll}
 & A=(1-\mu_{i})^{1/{\omega_{1}}}\Rightarrow\mu_{i}=1-A^{1/{\omega_{1}}}\\
\Rightarrow & \sum_{i=1}^{n}\mu_{i}=k-\sum_{i=1}^{n}A^{1/{\omega_{1}}}=X.
\end{array}$

Similarly, $k-\sum_{i=1}^{n}\hat{A}^{1/{\omega_{1}}}=\hat{X}$.

$\because X=\sum_{i=1}^{n}\mu_i>\sum_{i=1}^{n}\hat{\mu_i}=\hat{X}$, and $\mu_i$ and $\hat{\mu}_i$ share the same $\omega_i$,

$
\begin{array}{lll}
\therefore & & \sum_{i=1}^{n}A^{1/{\omega_{1}}}<\sum_{i=1}^{n}\hat{A}^{1/{\omega_1}}\\
 & \Rightarrow & A<\hat{A}\\
& \Rightarrow & (1-\mu_{i})^{1/{\omega_1}}<(1-\hat{\mu}_{i})^{1/{\omega_1}}\\
 & \Rightarrow & \mu_i>\hat{\mu_i}.
\end{array}
$

Therefore, we have $\forall i\in [1, n]$, $\mu_i>\hat{\mu_i}\Leftrightarrow\sum_{i=1}^{n}\mu_i>\sum_{i=1}^{n}\hat{\mu_i}$.
\end{proof}

Lemma \ref{l1} shows the fact that when selecting neighbours with multivariate Wallenius' non-central hyper-geometric distribution by several randomised neighbour selection methods from a same sized partition, if one method selects more neighbours, then the expected number of each neighbour in that method is greater too, and vice versa.

\begin{lemma}
\label{l2}
The method, which selects more users from the first partition (contains user $u_1$ to $u_k$) of a descending order similarity list, yields a better rating prediction, i.e., $\sum_{i=1}^k\mu_i > \sum_{i=1}^k\hat{\mu_i}\Rightarrow \alpha \geq \hat{\alpha}$, where $\alpha$ denotes the accuracy metric value.
\end{lemma}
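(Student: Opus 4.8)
The plan is to work directly with the accuracy metric in the form $\alpha=\sum_{i=1}^n sim(a,i)\mu_i$ from Equation \eqref{ALPHA}, and to show that the hypothesis forces this weighted sum to move in the desired direction. First I would set $d_i=\mu_i-\hat{\mu_i}$ so that $\alpha-\hat\alpha=\sum_{i=1}^n sim(a,i)\,d_i$, and record the global constraint that both samplings retain exactly $k$ neighbours, i.e.\ $\sum_{i=1}^n\mu_i=\sum_{i=1}^n\hat{\mu_i}=k$, whence $\sum_{i=1}^n d_i=0$. The goal then becomes showing $\sum_{i=1}^n sim(a,i)\,d_i\ge 0$ given that $\sum_{i=1}^k d_i>0$ (the hypothesis) and that the candidate list is in descending similarity order.

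Next I would invoke Lemma \ref{l1} partition by partition. Since each partition is itself an equal-sized set sampled by the multivariate Wallenius distribution with the same weights $\omega_i$, Lemma \ref{l1} applies to the first partition $\{u_1,\dots,u_k\}$ and upgrades the aggregate hypothesis $\sum_{i=1}^k\mu_i>\sum_{i=1}^k\hat{\mu_i}$ to the pointwise statement $\mu_i>\hat{\mu_i}$, i.e.\ $d_i>0$, for every $i\in[1,k]$. The same lemma shows more generally that within any single partition the sign of $d_i$ is constant, so the tail $i>k$ decomposes cleanly into whole partitions on which one method selects uniformly more (or uniformly fewer) neighbours than the other.

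The decisive computation I would run is an Abel (summation-by-parts) rearrangement: using $\sum_{i=1}^n d_i=0$,
\begin{equation}
\alpha-\hat\alpha=\sum_{i=1}^n sim(a,i)\,d_i=\sum_{i=1}^{n-1}\bigl(sim(a,i)-sim(a,i+1)\bigr)S_i,\qquad S_i=\sum_{j=1}^i d_j .
\end{equation}
Because the list is in descending order, every factor $sim(a,i)-sim(a,i+1)\ge 0$, so it suffices to prove that each partial sum satisfies $S_i\ge 0$. For $i\le k$ this is immediate from the pointwise positivity $d_j>0$ established above, so the prefix sums through the first partition are strictly positive.

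The hard part will be controlling $S_i$ for $i>k$, where the deficit that the stronger method concedes outside the first partition need not be spread evenly: in principle a partial sum could dip below zero if some later partition were heavily front-loaded by that method. I would close this gap by exploiting the monotone, geometrically decreasing partition allocation $\lceil p(1-p)^{i-1}k\rceil$ of the scheme (and the matching partitioned form imposed on the comparison methods in Section \ref{EVA}), which front-loads the per-partition surplus and thereby keeps every $S_i\ge 0$; a coarser but self-contained fallback splits $\sum_{i>k} sim(a,i)\,d_i$ by the sign of $d_i$ and uses $sim(a,i)\le sim(a,k)$ on the tail together with $\sum_i d_i=0$. Establishing the nonnegativity of the tail partial sums is the one place where the structure of the selection rule, rather than Lemma \ref{l1} alone, is genuinely required, and I expect it to be the main obstacle in turning the heuristic ``more high-similarity neighbours means higher accuracy'' into a rigorous inequality.
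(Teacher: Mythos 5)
Your reduction via Abel summation is a genuinely different route from the paper's, and the place where you get stuck is not a defect of your route: it is exactly the case that the paper's own proof silently excludes. The paper does not sum by parts. It assumes what it calls ``an extreme case'': $X_1>\hat{X}_1$ while $X_j<\hat{X}_j$ for \emph{every} $j\geq 2$, i.e.\ the first partition is the only surplus partition. It then upgrades these aggregate inequalities to pointwise ones via Lemma \ref{l1} (just as you do), writes the balance identity $\sum_{i\leq k}(\mu_i-\hat{\mu}_i)=\sum_{i>k}(\hat{\mu}_i-\mu_i)$, and weights the two sides using $sim(a,i)\geq sim(a,k)\geq sim(a,j)$ for $i\leq k<j$. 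Note that under that same sign pattern your argument closes immediately and more cleanly: $d_i>0$ for $i\leq k$ and $d_i<0$ for all $i>k$, so the prefix sums $S_i$ increase up to $S_k>0$ and then decrease monotonically to $S_n=0$, hence are nonnegative throughout, and your Abel identity yields $\alpha\geq\hat{\alpha}$.

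The tail gap you flag is real, and your proposed fallback cannot repair it, because the implication as stated is false without a restriction on how the selections outside the first partition are distributed. Concrete counterexample: take three partitions with similarities near $1$, $0.9$ and $0.1$; let one method select $2$ users from partition $1$ and $k-2$ from partition $3$, and the competitor select $1$ user from partition $1$ and $k-1$ from partition $2$. The hypothesis holds ($2>1$), yet $\alpha\approx 2+0.1(k-2)$ while $\hat{\alpha}\approx 1+0.9(k-1)$, so $\alpha<\hat{\alpha}$ already for $k\geq 3$. Your coarse bound $sim(a,i)\leq sim(a,k)$ points the wrong way precisely on the tail-surplus terms (positive $d_i$ sitting at low-similarity positions), which is why no argument from the stated hypothesis alone can succeed. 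A correct proof must therefore import extra structure: either the paper's unjustified assumption that every partition after the first runs a deficit, or, as you anticipate, a proof that the geometric front-loading of the PPNS quotas against the near-uniform spread of global Wallenius sampling forces $S_i\geq 0$ for all $i$ --- a single-crossing/majorization statement that neither you nor the paper actually establishes. In short: you located the lemma's weak point precisely; the paper's proof does not fill that hole, it assumes it away.
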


\begin{proof}
Let $X_j=\sum_{i\in{group_j}}\mu_i$,
$\hat{X}_j=\sum_{i\in{group_j}}\hat{\mu}_i$, e.g., $X_1=\sum_{i\in{group_1}}\mu_i=\sum_{i=1}^k{\mu_i}$. Assume an extreme case:

$\begin{array}{ccc}
X_{1} & > & \hat{X_{1}}\\
X_{2} & < & \hat{X_{2}}\\
X_{3} & < & \hat{X_{3}}\\
\vdots & < & \vdots
\end{array}
$

$\because$ $k=\sum_{j}{X_j}=\sum_{j}{\hat{X_j}},$

$\therefore$
$X_1-\hat{X_1}=(\hat{X_2}-X_2)+(\hat{X_3}-X_3)+\cdots$.

It is obvious that, in both sides of the above equation, every item $>0$. According to Lemma \ref{l1},
$X>\hat{X}\Leftrightarrow{\mu_i>\hat{\mu_i}}$, we have,

$\sum_{group_1}(\mu_i-\hat{\mu_i})=\sum_{group_2}(\hat{\mu_i}-\mu_i)+\sum_{group_3}(\hat{\mu_i}-\mu_i)+\cdots$,
and every $(\cdot)>0$.

$\because$ $1\geq sim(a,i)\geq sim(a,j)\geq 0$, $(i<j)$,

$
\begin{array}{lll}
\therefore\sum_{group_{1}}sim(a,i)(\mu_{i}-\hat{\mu_{i}}) & \geq & \sum_{group_{2}}sim(a,i)(\hat{\mu_{i}}-\mu_{i})\\
 &  & +\sum_{group_{3}}sim(a,i)(\hat{\mu_{i}}-\mu_{i})\\
 &  & +\cdots
\end{array}
$.

$\therefore$ $\sum_{i=1}^n{sim(a,i)\mu_i}\geq
\sum_{i=1}^n{sim(a,i)\hat{\mu_i}}$. According to Equation \eqref{ALPHA}, we have $\alpha \geq \hat{\alpha}$.

Therefore, the method, which selects more users from the first group, is more reliable on the predicted rating value.
\end{proof}

\begin{theorem}
\label{t1}
If $p>1-\left(\frac{n-k}{n}\right)^{\omega_1}$, the recommendation accuracy performance of Partitioned Probabilistic Neighbour Selection is better than PNCF method \cite{ZHU2014} and Probabilistic Neighbour Selection \cite{ADAMOPOULOS2014}.
\end{theorem}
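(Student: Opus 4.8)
The plan is to reduce everything to Lemma~\ref{l2}: since that lemma says a method selecting more users from the first partition $\{u_1,\dots,u_k\}$ achieves the larger (or equal) accuracy metric, it suffices to show that Partitioned Probabilistic Neighbour Selection draws strictly more neighbours from the first partition than the global weighted sampling used by PNCF \cite{ZHU2014} and Probabilistic Neighbour Selection \cite{ADAMOPOULOS2014}. For PPNS this count is deterministic: partition $i=1$ contributes $\lceil p(1-p)^{0}k\rceil=\lceil pk\rceil$, so $\sum_{i=1}^k\mu_i^{\mathrm{PPNS}}=\lceil pk\rceil$. For the global method I would use the Wallenius approximation \eqref{MAIN} with $m_i=1$, which gives $\hat\mu_i=1-\hat A^{\omega_i}$ and hence $\sum_{i=1}^k\hat\mu_i=k-\sum_{i=1}^k\hat A^{\omega_i}$, where the constant $\hat A$ is fixed by the global normalisation $\sum_{i=1}^n\hat A^{\omega_i}=n-k$. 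The target inequality $\lceil pk\rceil>\sum_{i=1}^k\hat\mu_i$ then becomes $\sum_{i=1}^k\hat A^{\omega_i}>k-\lceil pk\rceil$.

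Next I would bound the two sides. First, since the candidate list is in descending similarity order and the weight \eqref{OMEGA} is increasing in similarity, $\omega_1=\max_i\omega_i$; because $0<\hat A<1$ this makes $\hat A^{\omega_i}\ge \hat A^{\omega_1}$ for every $i\le k$, so $\sum_{i=1}^k\hat A^{\omega_i}\ge k\hat A^{\omega_1}$. Second, the normalisation together with $\omega_i\ge 1$ (the weights are exponentials of non-negative similarities) gives $\hat A^{\omega_i}\le\hat A$, hence $n-k=\sum_{i=1}^n\hat A^{\omega_i}\le n\hat A$, i.e. $\hat A\ge\frac{n-k}{n}$. Combining, $\sum_{i=1}^k\hat A^{\omega_i}\ge k\hat A^{\omega_1}\ge k\left(\frac{n-k}{n}\right)^{\omega_1}$. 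On the other side, $\lceil pk\rceil\ge pk$ gives $k-\lceil pk\rceil\le k(1-p)$. Thus it is enough to verify $k\left(\frac{n-k}{n}\right)^{\omega_1}>k(1-p)$, which is exactly the hypothesis $p>1-\left(\frac{n-k}{n}\right)^{\omega_1}$.

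Chaining these, $\sum_{i=1}^k\hat A^{\omega_i}\ge k\left(\frac{n-k}{n}\right)^{\omega_1}>k(1-p)\ge k-\lceil pk\rceil$, so $\lceil pk\rceil>\sum_{i=1}^k\hat\mu_i$, i.e. $\sum_{i=1}^k\mu_i^{\mathrm{PPNS}}>\sum_{i=1}^k\hat\mu_i$. Applying Lemma~\ref{l2} yields $\alpha^{\mathrm{PPNS}}\ge\hat\alpha$ against both baselines, which is the claim.

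I expect the main obstacle to be the two structural facts about the Wallenius constant $\hat A$: establishing $\hat A\ge\frac{n-k}{n}$ from the global normalisation, and justifying the monotonicity $\hat A^{\omega_i}\ge\hat A^{\omega_1}$, both of which hinge on $0<\hat A<1$ and on the ordering $\omega_1\ge\omega_2\ge\cdots$ inherited from the descending similarity list (and on $\omega_i\ge 1$, which needs the similarities to be non-negative, as for the cosine measure \eqref{COS}). A secondary point to handle carefully is that \eqref{MAIN} is only Manly's approximation to the true Wallenius mean, so strictly the comparison is between the approximate accuracy metrics; I would note that \eqref{MAIN} is exact when all $\omega_i$ are equal and a good approximation otherwise, as recorded after \eqref{approx}.
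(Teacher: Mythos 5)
Your proposal is correct, and it reaches the paper's threshold $1-\left(\frac{n-k}{n}\right)^{\omega_1}$ by a genuinely different route. The paper's own proof reduces to Lemma~\ref{l2} exactly as you do, but it obtains the bound on the baselines' first-partition mass by an extremal-case computation: it posits the ``best case'' for the baselines, $sim(a,1)=\cdots=sim(a,k)=1>0=sim(a,k+1)=\cdots=sim(a,n)$, so that only two weights occur ($\omega_1$ for the top $k$ and $\omega_n=1$ for the rest), solves $k=k\mu_1+(n-k)\mu_n$ together with \eqref{MAIN} to obtain $\mu_1<1-\left(\frac{n-k}{n}\right)^{\omega_1}$, and then concludes via $\sum_{i\in group_1}\hat\mu_i\leq k\mu_1<pk$. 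You instead keep the similarity profile arbitrary and extract the same bound from the global normalisation $\sum_{i=1}^n\hat A^{\omega_i}=n-k$, using $\omega_i\geq 1$ to get $\hat A\geq\frac{n-k}{n}$ and the descending weight order to get $\sum_{i=1}^k\hat A^{\omega_i}\geq k\hat A^{\omega_1}$. Your version buys something real: the paper never proves that its two-level configuration actually maximises the baselines' group-1 mass over all similarity profiles (it simply declares it the best case), whereas your derivation holds for every non-negative similarity vector, so it closes that gap; you also track the ceiling $\lceil pk\rceil$ from Algorithm~\ref{ALG}, where the paper writes $pk$. The price is that you need the two structural facts you flagged, $0<\hat A<1$ and $\omega_i\geq 1$; both are easy to justify (the first from $0<\hat\mu_i<1$ under weighted sampling with $k<n$, the second from non-negative similarities, which the proof of Lemma~\ref{l2} already assumes). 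Both proofs share the same modelling caveats: Manly's formula \eqref{MAIN} is only an approximation to the Wallenius mean, and the first-partition count of Partitioned Probabilistic Neighbour Selection is treated as deterministic rather than as the output of a random within-partition draw.
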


\begin{proof}
We firstly demonstrate the best case for the PNCF method \cite{ZHU2014} and Probabilistic Neighbour Selection \cite{ADAMOPOULOS2014}: $sim(a,1)=\cdots=sim(a,k)=1>0=sim(a,k+1)=\cdots=sim(a,n)$.

$\therefore$ $k=k\mu_1+(n-k)\mu_n.$

According to Equation \eqref{MAIN}, $A={(1-\mu_1)^{1/{\omega_1}}=(1-\mu_n)^{1/{\omega_n}}}$, $\mu_n=1-(1-\mu_1)^{1/{\omega_1}}.$ Let $\Delta=\mu_1-\mu_n=(1-\mu_1)^{1/{\omega_1}}-(1-\mu_1)$, then $\mu_1={\frac{k}{n}+\frac{(n-k)\Delta}{n}<\frac{k}{n}+\Delta}$, namely, $\mu_1<\frac{k}{n}+(1-\mu_1)^{1/{\omega_1}}-(1-\mu_1)$, then $\mu_1<1-\left(\frac{n-k}{n}\right)^{\omega_1}.$

In PNCF method \cite{ZHU2014} and Probabilistic Neighbour Selection \cite{ADAMOPOULOS2014}, $\sum_{i\in group_1}\mu_i\leq k\mu_1$, while in Partitioned Probabilistic Neighbour Selection, $\sum_{i\in group_1}\mu_i=pk.$ Therefore, according to Lemma \ref{l2}, when $p>1-\left(\frac{n-k}{n}\right)^{\omega_1}$, $\alpha \geq \hat{\alpha}$, namely, the recommendation accuracy of Partitioned Probabilistic Neighbour Selection is better than PNCF method \cite{ZHU2014} and Probabilistic Neighbour Selection \cite{ADAMOPOULOS2014}.
\end{proof}

Since we have qualitatively analysed the recommendation accuracy performance between Partitioned Probabilistic Neighbour Selection and PNCF method \cite{ZHU2014} and Probabilistic Neighbour Selection \cite{ADAMOPOULOS2014}, now we provide the quantitative analysis of our Partitioned Probabilistic Neighbour Selection. Let $\alpha_0$ be the initial accuracy metric.

$\because \frac{\sum_{i=1}^{k}sim(a,i)\mu_{i}}{\sum_{i=1}^{k}sim(a,i)}-\frac{\sum_{i=1}^{k}\mu_{i}}{\sum_{i=1}^{k}1}=\frac{\sum_{i=1}^{k-1}\sum_{j=i+1}^{k}(sim(a,i)-sim(a,j))(\mu_{i}\mu_{j})}{k\sum_{i=1}^{k}sim(a,i)} \geq 0,$

then we have $\frac{\sum_{i=1}^k \mu_i}{\sum_{i=1}^k 1}\leq\frac{\sum_{i=1}^k sim(a,i)\mu_i}{\sum_{i=1}^k sim(a,i)}$.

$\begin{array}{lll}
\text{Namely}, & & p=\frac{pk}{k}=\frac{\sum_{i=1}^k \mu_i}{\sum_{i=1}^k 1}\leq\frac{\sum_{i=1}^k sim(a,i)\mu_i}{\sum_{i=1}^k sim(a,i)}\\
& \leq & \frac{\sum_{i=1}^k sim(a,i)\mu_i+\sum_{i=k+1}^{2k} sim(a,i)\mu_i+\cdots}{\sum_{i=1}^k sim(a,i)}\\
& = & \frac{\alpha_0}{\sum_{i=1}^k sim(a,i)}.
\end{array}$

Thus, $p\leq\frac{\alpha_0}{\sum_{i=1}^k sim(a,i)}$. Namely, when $p\geq\frac{\alpha_0}{\sum_{i=1}^k sim(a,i)}$ the actual accuracy $\alpha$ must be greater than $\alpha_0$. Therefore, we give the range of $p$'s value to guarantee the accuracy metric $\alpha\geq\alpha_0$, $p\in[\frac{\alpha_0}{\sum_{i=1}^k sim(a,i)},1]$.

\subsection{Security Analysis}
In this section, we firstly provide the range of $p$, so that our approach guarantees the system security against $k$NN attack. Next, we present the quantitative analysis by providing a relationship between the the probabilistic parameter $p$ and the security metric $\beta$.

In PNCF method \cite{ZHU2014}, according to Equation \eqref{PMF}, the probability mass function is:
\begin{equation}
PMF=\bm{I(x)}=\int_0^1\prod_{i=1}^n(1-t^{\omega_i/d})^{x_i}\text{d}t,
\end{equation}
\begin{equation}
d=\bm{\omega\cdot(m-x)}=\sum_{i=1}^n\omega_i(1-x_i),
\end{equation}
where, $\bm{x}=(x_1,x_2,\ldots,x_n)$, $\bm{\omega}=(\omega_1,\omega_2,\ldots,\omega_n)$.

For the case of selecting the top-$k$ users, we have:
\begin{equation}
x_i=
\begin{cases}
1 & i\text{\ensuremath{\in}[1,k]}\\
0 & i\in[k+1,n]
\end{cases}.
\end{equation}
Thus, the probability of selecting top-$k$ users in PNCF method \cite{ZHU2014} and Probabilistic Neighbour Selection \cite{ADAMOPOULOS2014} is:
\begin{equation}
\Pr=\bm{I(x)}=\int_0^1\prod_{i=1}^k(1-t^{\omega_i/d})\text{d}t>0,
\end{equation}
\begin{equation}
d=\bm{\omega\cdot(m-x)}=\sum_{i=k+1}^n\omega_i.
\end{equation}
In Partitioned Probabilistic Neighbour Selection, because we actually select $\lceil pk\rceil$ users from the top-$k$ users, when $p \leq \frac{k-1}{k}$, the probability of selecting top-$k$ users as the final $k$ neighbours is $0$, namely, we provide the absolute security against the $k$NN attack when setting $p \leq \frac{k-1}{k}$.

To compute the value of $\beta$ according to our selection process, we select $p(1-p)^{i-1}k$ users from group $i$, so the first time we select one user from a group, the number $j$ of this group obeys the following inequation:
\begin{equation}
\begin{array}{ll}
 & p(1-p)^{j-1}k<\frac{3}{2}\\
\Rightarrow & j>1+{\frac{(\ln{3}-\ln{2})-\ln{pk}}{\ln(1-p)}}.
\end{array}
\end{equation}
Before the group $j+1$, we have selected $pk+p(1-p)k+ \cdots +p(1-p)^{j-1}k$ users, there are $(1-p)^{j-1}k$ users can be selected. Since the each of the $(1-p)^{j-1}k$ comes from one group, the total number of the groups where the $k$ neighbours come from is:
\begin{equation}
\begin{array}{ll}
\beta & = (j-1)+{\frac{(1-p)^{j-1}k}{1}}\\
 & = (j-1)+(1-p)^{j-1}k.
\end{array}
\end{equation}

\subsection{Analysis Results}
According to the previous analysis, when setting the probabilistic parameter $p$ as $1-\left(\frac{n-k}{n}\right)^{\omega_1}<p\leq\frac{k-1}{k}$, our Partitioned Probabilistic Neighbour Selection achieve better performance of recommendation accuracy than Private Neighbour Selection \cite{ZHU2014} and Probabilistic Neighbour Selection \cite{ADAMOPOULOS2014}. Then we give the the relationship between the accuracy metric $\alpha$ and security metric $\beta$ of our Partitioned Probabilistic Neighbour Selection by the following equation:
\begin{equation}
\label{final}
\begin{cases}
p\in[\frac{\alpha_{0}}{\sum_{i=1}^{k}sim(a,i)},1]\\
j=\left\lceil 1+\frac{(\ln3-\ln2)-\ln pk}{\ln(1-p)}\right\rceil \\
\beta=(j-1)+(1-p)^{j-1}k
\end{cases}
\end{equation}
We guarantee to achieve $\alpha_0$ accuracy against $\beta$-$k$NN attack.

\subsection{A representative Example}
In this section, we show a simple but representative example of the range of the probabilistic parameter $p$. Suppose $k=\theta n$, $\theta\in(0,1]$, we know the lower bound of $p$, $1-\left(\frac{n-k}{n}\right)^{\omega_1}=1-(1-\theta)^{\omega_1}$, is a monotone-increasing function of $\theta$. Because the value of $\theta$ is always small($k\in[30,50]$ and $n$ is always greater than 1000), the value of the lower bound of $p$ will be very small. In the mean time, consider the upper bound of $p$, it would be a number close to 1. Therefore, the range of value $p$ is very large in the set of $(0,1)$.

Now we will show an example in a real scenario. Let $k=50$, $n=500$, $\epsilon=1$, $RS=1$, so the lower bound of $p$ would be
\begin{equation}
1-\left(\frac{n-k}{n}\right)^{exp(\frac{\epsilon}{4k\times RS})}=1-\left(\frac{500-50}{500}\right)^{exp(\frac{1}{4\times 50\times 1})}\approx 0.1,
\end{equation}
and the upper bound of $p$ would be $\frac{k-1}{k}=\frac{50-1}{50}=0.98$. Thus, in the above real scenario, when we set $p$ in the range of $(0.1,0.98]\subset[0,1)$, the Partitioned Probabilistic Neighbour Selection would yield better performance of recommendation accuracy against $k$NN attack.

\section{Performance Evaluation}
\label{EXP}
In Section \ref{EVA}, we theoretically analyse the performance on both recommendation accuracy and system security, and prove that to successfully preserve customer's privacy against $k$NN attack, our method ensures a better performance of recommendation accuracy than the PNCF method \cite{ZHU2014} and Probabilistic Neighbour Selection \cite{ADAMOPOULOS2014}. In this section, we compare the recommendation accuracy between Partitioned Probabilistic Neighbour Selection and global Neighbour Selection \cite{ZHU2014} and Probabilistic Neighbour Selection \cite{ADAMOPOULOS2014} by the experiments on real world dataset.

The dataset in the experiments is the MovieLens dataset\footnote{http://grouplens.org/datasets/movielens/}. The MovieLens dataset consists of 100,000 ratings (1-5 integral stars) from 943 users on 1682 films, where each user has voted more than 20 films, and each film received 20$-$250 users' rating. Specifically, we randomly select one rating of a random user, and then predict this user's potential value by $k$ Nearest Neighbour ($k$NN), Partitioned Probabilistic Neighbour Selection (PPNS), Probabilistic Neighbour Selection (nPNS) \cite{ADAMOPOULOS2014}, Private Neibgbhour Selection Collaborative Filtering (PNCF) \cite{ZHU2014}.

In this paper, we use a famous  measurement metric, Mean Absolute Error (MAE) \cite{WILLMOTT2005, ZHU2014}, to measure the recommendation accuracy:
\begin{equation}
\label{mae}
MAE=\frac{1}{T}\sum_{i\in T}|r_{ai}-\hat{r}_{ai}|,
\end{equation}
where $r_{ai}$ is the real rating of user $u_a$ on item $t_i$, and $\hat{r}_{ai}$ is the predicting rating, $T$ is the test times. To guarantee a reasonable experimental result, in our experiments, $r_{ai} \neq 0$. Clearly, a lower MAE value denotes a better prediction accuracy. Note that in each experiment, we consider the $k$NN CF recommendation method as a baseline (the best method on accuracy performance).

In our experiments, we compute the parameter $RS$ by the previous theory \cite{ZHU2014}. We set $T=10,000$, namely, we do the experiments 10,000 times to compute the MAE. Specifically, we randomly select one target user and item at each time. Our experiments are run on user-based CF (because both $k$NN attack and $\beta$-$k$NN attack are user-based attack), and we use the cosine-based metric to compute the similarity between users. Table \ref{p} and Figure \ref{p} show the relationship between accuracy performance of Partitioned Probabilistic Neighbour Selection and parameter $p$, where we set $\epsilon=1$, $k=50$, $\rho=0.5$. Table \ref{beta} and Figure \ref{beta} show the relationship between security performance of Partitioned Probabilistic Neighbour Selection (value of $\beta$) and parameter $k$, where the total partition number is 19. Table \ref{k} and Figure \ref{k} show the relationship between accuracy performance of all the four methods and parameter $k$, where we set $\epsilon=1$, $p=0.5$, $\rho=0.5$. Table \ref{rho} and Figure \ref{rho} show the relationship between accuracy performance of PNCF \cite{ZHU2014} and parameter $\rho$, where we set $\epsilon=1$, $p=0.5$, $k=50$.

\begin{figure}[!h]
\makeatletter\def\@captype{figure}\makeatother
\begin{minipage}[b]{0.5\textwidth} 
\centering 
\includegraphics[width=2.5in]{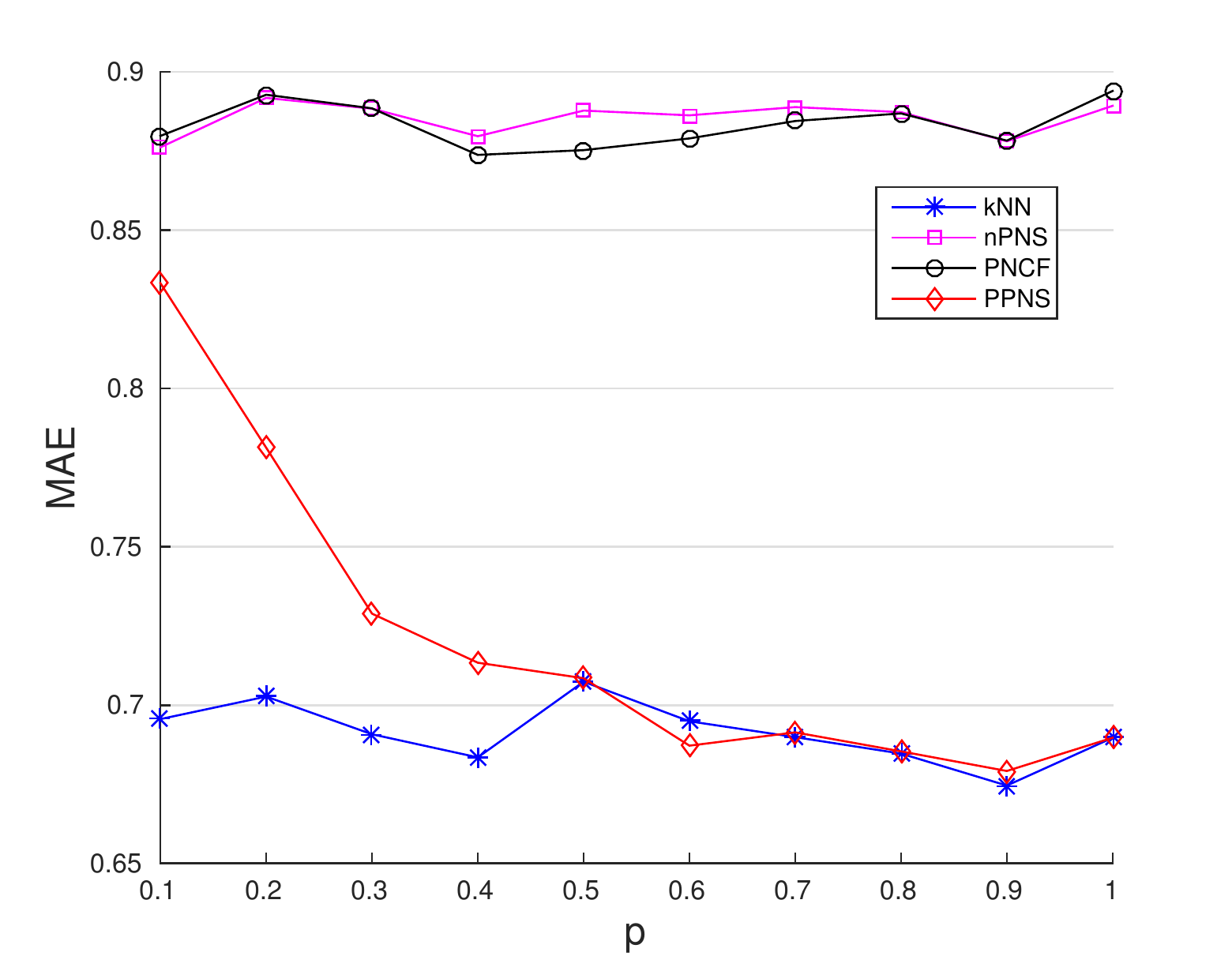}
\caption{Impacts of p on accuracy ($\epsilon=1$, $k=50$, $\rho=0.5$)}
\label{p} 
\end{minipage}
\makeatletter\def\@captype{table}\makeatother
\begin{minipage}[b]{0.5\textwidth} 
\centering
\begin{tabular}{ccccccccc}
\hline
$p$ &0.1 &0.2 &0.3 &0.4 &0.5\\
\hline
$k$NN &0.6956 &0.7027 &0.6908 &0.6835 &0.7074\\ 
PPNS &0.8333 &0.7813 &0.7289 &0.7134 &0.7085\\  
nPNS &0.8762 &0.8918 &0.8884 &0.8797 &0.8878\\
PNCF &0.8798 &0.8928 &0.8885 &0.8738 &0.8753\\ 
\hline
\hline
$p$  &0.6 &0.7 &0.8 &0.9 &1.0\\ 
\hline
$k$NN  &0.6849 &0.6899 &0.6847 &0.6746 &0.6897\\
PPNS  &0.6872 &0.6914 &0.6854 &0.6792 &0.6897\\
nPNS  &0.8863 &0.8889 &0.8873 &0.8781 &0.8893\\
PNCF  &0.8790 &0.8845 &0.8869 &0.8783 &0.8940\\
\hline
\end{tabular} 
\caption{Impacts of p on accuracy ($\epsilon=1$, $k=50$, $\rho=0.5$)} 
\label{p} 
\end{minipage} 
\end{figure}

\begin{figure}[!h]
\makeatletter\def\@captype{figure}\makeatother
\begin{minipage}[b]{0.5\textwidth} 
\centering 
\includegraphics[width=2.5in]{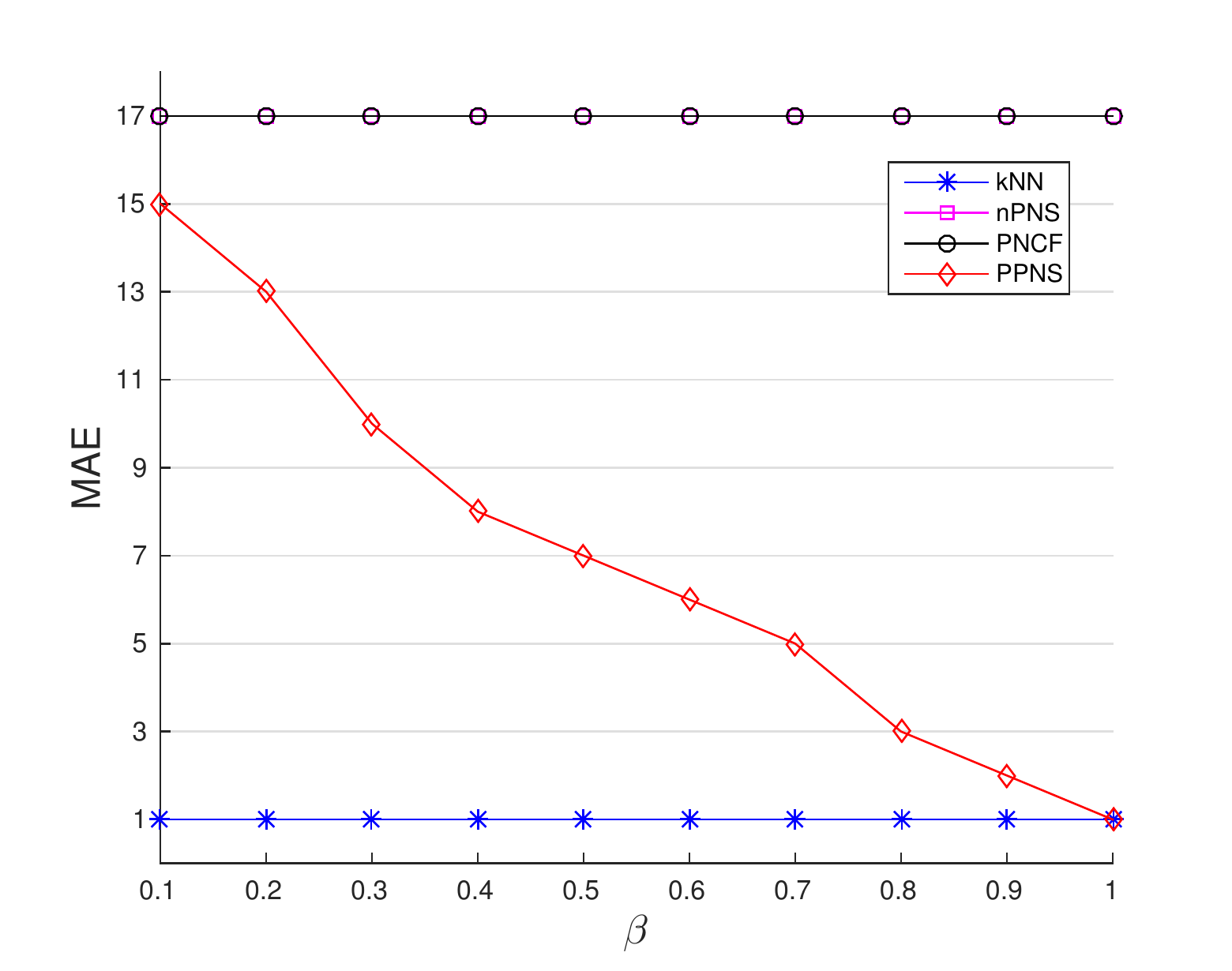}
\caption{Impacts of p on security (total partition number = 19)}
\label{beta} 
\end{minipage}
\makeatletter\def\@captype{table}\makeatother
\begin{minipage}[b]{0.5\textwidth} 
\centering
\begin{tabular}{ccccccccc}
\hline
$p$ &0.1 &0.2 &0.3 &0.4 &0.5\\
\hline
$k$NN &1 &1 &1 &1 &1\\ 
PPNS &15 &13 &10 &8 &7\\  
nPNS &17 &17 &17 &17 &17\\
PNCF &17 &17 &17 &17 &17\\ 
\hline
\hline
$p$  &0.6 &0.7 &0.8 &0.9 &1.0\\ 
\hline
$k$NN  &1 &1 &1 &1 &1\\
PPNS  &6 &5 &3 &2 &1\\
nPNS  &17 &17 &17 &17 &17\\
PNCF  &17 &17 &17 &17 &17\\
\hline
\end{tabular} 
\caption{Impacts of $p$ on security (total partition number = 19)}
\label{beta} 
\end{minipage} 
\end{figure}

\begin{figure}[!h]
\makeatletter\def\@captype{figure}\makeatother
\begin{minipage}[b]{0.5\textwidth} 
\centering 
\includegraphics[width=2.5in]{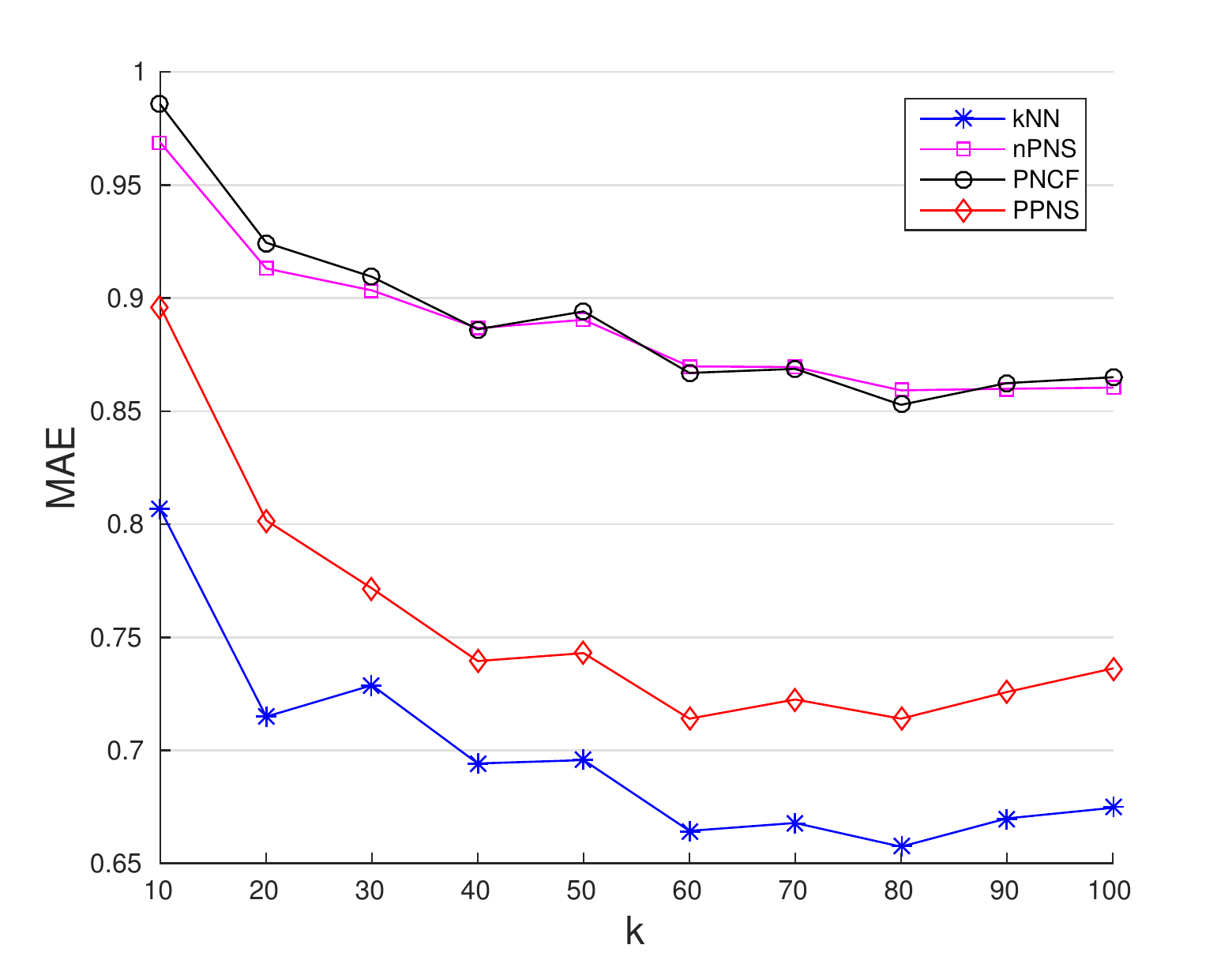}
\caption{Impacts of k on accuracy ($\epsilon=1$, $p=0.5$, $\rho=0.5$)}
\label{k} 
\end{minipage}
\makeatletter\def\@captype{table}\makeatother
\begin{minipage}[b]{0.5\textwidth} 
\centering
\begin{tabular}{ccccccccc}
\hline
$k$ &10 &20 &30 &40 &50\\
\hline
$k$NN &0.8065 &0.7149 &0.7288 &0.6942 &0.6957\\ 
PPNS &0.8962 &0.8017 &0.7716 &0.7395 &0.7430\\  
nPNS &0.9687 &0.9131 &0.9034 &0.8867 &0.8904\\
PNCF &0.9856 &0.9245 &0.9094 &0.8862 &0.8941\\ 
\hline
\hline
$k$  &60 &70 &80 &90 &100\\ 
\hline
$k$NN  &0.6644 &0.6679 &0.6574 &0.6699 &0.6746\\
PPNS  &0.7140 &0.7225 &0.7140 &0.7258 &0.7362\\
nPNS  &0.8698 &0.8695 &0.8592 &0.8599 &0.8604\\
PNCF  &0.8669 &0.8687 &0.8528 &0.8624 &0.8650\\
\hline
\end{tabular} 
\caption{Impacts of k on accuracy ($\epsilon=1$, $p=0.5$, $\rho=0.5$)}
\label{k}
\end{minipage} 
\end{figure}

\begin{figure}[!h]
\makeatletter\def\@captype{figure}\makeatother
\begin{minipage}[b]{0.5\textwidth} 
\centering 
\includegraphics[width=2.5in]{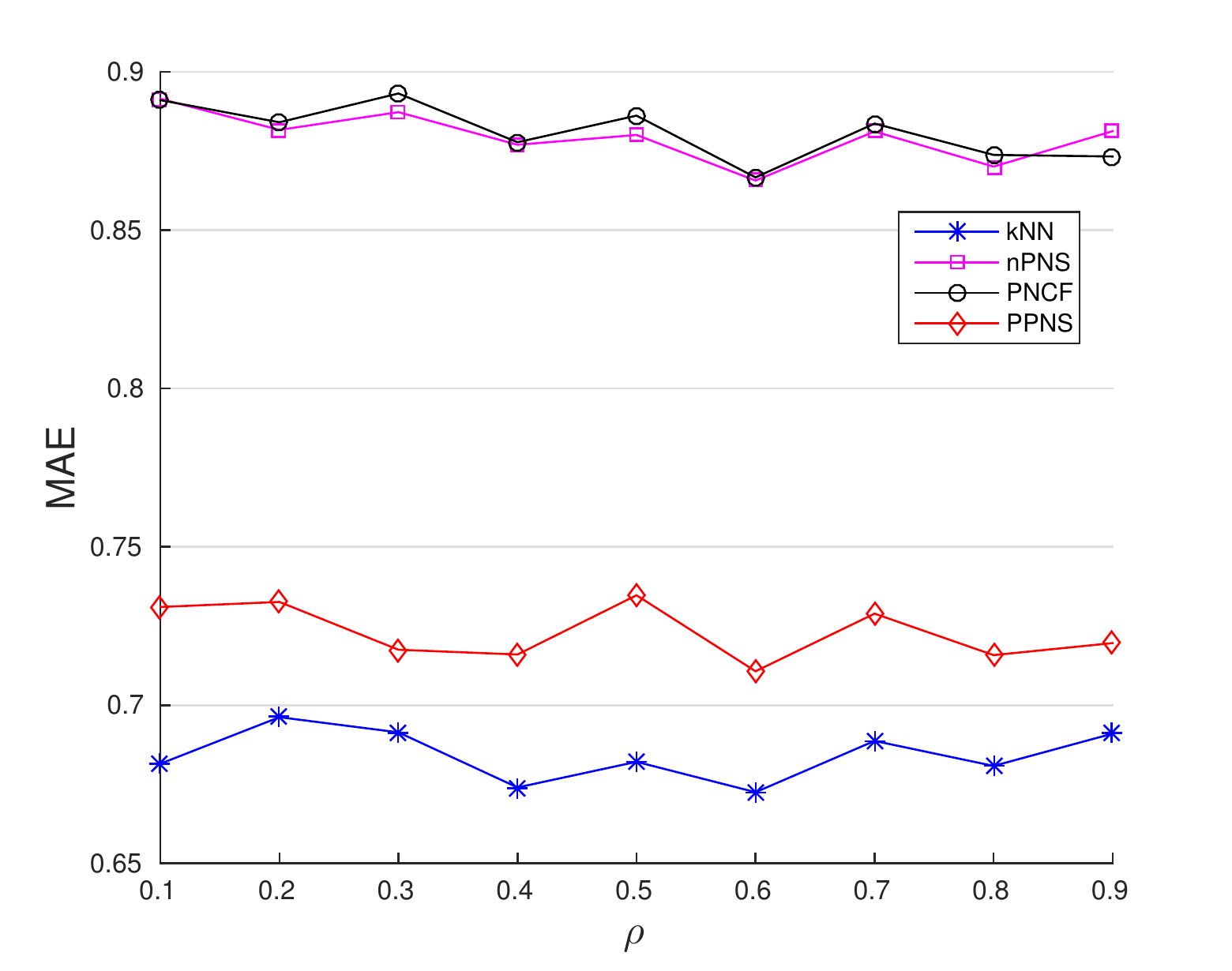}
\caption{Impacts of $\rho$ on accuracy ($\epsilon=1$, $p=0.5$, $k=50$)}
\label{rho}
\end{minipage}
\makeatletter\def\@captype{table}\makeatother
\begin{minipage}[b]{0.5\textwidth} 
\centering
\begin{tabular}{ccccccccc}
\hline
$\rho$ &0.1 &0.2 &0.3 &0.4 &0.5\\
\hline
$k$NN &0.6815 &0.6962 &0.6914 &0.6740 &0.6821\\ 
PPNS &0.7310 &0.7326 &0.7175 &0.7160 &0.7374\\  
nPNS &0.8915 &0.8817 &0.8873 &0.8770 &0.8801\\
PNCF &0.8911 &0.8841 &0.8932 &0.8778 &0.8862\\ 
\hline
\hline
$\rho$  &0.6 &0.7 &0.8 &0.9\\ 
\hline
$k$NN  &0.6725 &0.6887 &0.6808 &0.6910\\
PPNS  &0.7107 &0.7289 &0.7185 &0.7196\\
nPNS  &0.8657 &0.8812 &0.8701 &0.8813\\
PNCF  &0.8667 &0.8837 &0.8738 &0.8733\\
\hline
\end{tabular} 
\caption{Impacts of $\rho$ on accuracy ($\epsilon=1$, $p=0.5$, $k=50$)}
\label{rho}
\end{minipage} 
\end{figure}

\newpage
According to the experiments results, we are clear that:
\begin{itemize}
\item[(1)] From Fig. \ref{p}, when setting $p > {1-\left(\frac{n-k}{n}\right)^{\omega_1}}$, the accuracy performance of Partitioned Probabilistic Neighbour Selection is always better than the PNCF method \cite{ZHU2014} and Probabilistic Neighbour Selection \cite{ADAMOPOULOS2014}. When the value of $p$ is close to 1, the performance of Partitioned Probabilistic Neighbour Selection is close to $k$NN method. Particularly, when $p=1$, the Partitioned Probabilistic Neighbour Selection method is the same as $k$NN method.
\item[(2)] From Fig. \ref{p} and Fig. \ref{beta}, the accuracy performance of the neighbourhood-based CF methods largely depends on the quality of neighbours. Our Partitioned Probabilistic Neighbour Selection method yields a better trade-off between the recommendation accuracy and security, as when we offer a better accuracy performance, we do not lose the security much.
\item[(3)] From Fig. \ref{k}, the size of neighbour set impacts the accuracy performance of all of the neighbourhood-based CF recommendation approaches. A large value of neighbour set size $k$ yields a better accuracy performance.
\item[(4)] From Fig. \ref{rho}, the value of $\lambda$ in \cite{ZHU2014} is not achievable because the value of $\rho$ does not impact the accuracy performance of PNCF \cite{ZHU2014}.
\end{itemize}

\section{Conclusion}
\label{CON}
Recommender systems play an important role in e-commerce. To protect users' private information during the process of filtering, the existing privacy preserving neighbourhood-bases CF methods fail to protect users' privacy in rating prediction. The global probabilistic neighbour selection methods, such as the PNCF method \cite{ZHU2014} and Probabilistic Neighbour Selection \cite{ADAMOPOULOS2014} though can protect users' privacy against $k$NN attack successfully, but provide no data utility guarantee. To overcome the weaknesses of the current methods, we propose a novel privacy preserving neighbourhood-based CF method, Partitioned Probabilistic Neighbour Selection, to ensure a required recommendation accuracy while maintaining high system security against $\beta$-$k$NN attack (generalisation of $k$NN attack). Theoretical and experimental analysis show that to provide an accuracy-assured recommendation against the most popular attack, $k$NN attack, our Partitioned Probabilistic Neighbour Selection method yields a better trade-off between the recommendation accuracy and system security than the PNCF methods \cite{ZHU2014} and Probabilistic Neighbour Selection \cite{ADAMOPOULOS2014}.

\bibliographystyle{plaint}
\bibliography{zhigang}

\end{document}